\renewcommand{\raggedright}{\leftskip=0pt \rightskip=0pt plus 0cm}
\DeclareMathOperator*{\argmin}{arg\,min}
\newtheorem{thm}{Theorem}
\newtheorem{cor}{Corollary}
\newtheorem{lem}{Lemma}
\newenvironment{remark}[1][Remark]{\begin{trivlist}
\item[\hskip \labelsep {\bfseries #1}]}{\end{trivlist}}
\title{ Fidelity-Commensurability Tradeoff in Joint Embedding of Disparate Dissimilarities}
\author{Sancar Adali\footnote{
Corresponding Author: Johns Hopkins University,
Department of Applied Mathematics and Statistics,
100 Whitehead Hall,
3400 North Charles Street,
Baltimore, MD 21218-2682 ; \mbox{sadali1@jhu.edu}.
Acknowledgements: This work was partially supported by National Security Science and Engineering Faculty Fellowship (NSSEFF),
Johns Hopkins University Human Language Technology Center of Excellence (JHU HLT COE), and the
XDATA program of the Defense Advanced Research Projects Agency (DARPA) administered through Air Force Research Laboratory
contract FA8750-12-2-0303, the NSF BRAIN Early Concept Grants for Exploratory Research (EAGER) award DBI-1451081,
and Acheson J. Duncan Fund for the Advancement of Research in Statistics.} \and Carey E. Priebe\footnote{Johns Hopkins University}
}
\begin{document}
\maketitle
\abstract{ In  various data settings, it is necessary to compare observations  from disparate data sources. We assume the  data is in the dissimilarity representation\cite{duin2005dissimilarity} and investigate  a joint embedding method\cite{JOFC} that results in a commensurate representation of disparate dissimilarities. We further assume that there are ``matched'' observations from different conditions which can be considered to be highly similar, for the sake of inference. The joint embedding results in the joint optimization of fidelity (preservation of within-condition dissimilarities) and commensurability (preservation of between-condition dissimilarities between matched observations). We show that the tradeoff between these two criteria can be made explicit using weighted raw stress as the objective function for multidimensional scaling. In our investigations, we use a weight parameter, $w$, to control the tradeoff, and choose  match detection as the inference task. Our results show  weights  that are optimal (with respect to the inference task) are different than equal weights for commensurability and fidelity and  the proposed weighted embedding scheme provides significant improvements in statistical power.
}

\section{Introduction\label{sec:intro}}
% It is a challenge  to carry out inference on data from  disparate sources  (such as multiple sensors). The multitude  of sensors technology and large numbers of sensors  both  create difficulty and hold promise for efficient inference.
  We are interested in problems where the data sources are disparate and the inference task requires that  observations from  the different data sources  be judged to be similar or dissimilar. By ``disparate'',  we  mean that the observations from the two sources are inherently incomparable, either because the data are of different modalities (such as images and text), or there is  significant and unknown variability between  the data sources (such as psychometric data collected from different subjects).  Throughout this paper, we refer to the disparateness of the observations when we mention they are from  different ``conditions''.

	Consider a collection of  English Wikipedia articles  and the hyperlink graph based on the links between the same articles. Each article corresponds to a vertex in this directed graph\footnote{For simplicity, assume the collection of articles correspond to a connected graph.}. This correspondence illustrates our idea of ``matchedness". This example also illustrates the idea of  ``disparate'': there is no intuitive way to compare a text document and a vertex in a graph.
	%A pair of documents in different languages on the same topic are said to be ``matched". The matched documents are  not necessarily direct translations of each other, so  we do not restrict matchedness to be a well-defined bijection between documents in different languages.
	%therefore the matchedness relation do not require  a bijection from the document space to another.
	%However the matched documents  provide examples of  similar  observations coming from disparate sources, and
	We assume the training data consists of  a collection of matched data from the disparate data sources.

  The inference task we consider is match detection, i.e. deciding whether a new English article and  a new vertex in  the graph are matched. While a document can be compared   with other documents in the same language via a dissimilarity measure defined for documents, and vertices in the same graph can be compared via a dissimilarity measure defined for graph vertices, a direct comparison between  a document and a graph vertex is not possible.
	%While  a document in one language, say English, can be in English, a  French document  cannot be represented using the same features, and therefore cannot be directly compared to English documents.
		To facilitate  our approach to the inference task at hand, it is necessary   to derive a data representation  where the  observations from different conditions can be compared, i.e. the representation is commensurate.  %This data representation  should both preserve the high similarity of ``matched''  observations, the degree of  similarity of observations coming from the same source, and the dissimilarity of ``unmatched'' observations.
	We will use a finite-dimensional Euclidean space for  this commensurate representation, where standard  statistical inference tools can be used.
  %It should also be parsimonius, so that it can be learned from data of limited  size.

		%Formally, the original data  reside in a heterogeneous collection of  spaces.
		As in the wikipedia document/graph example, it is possible that a feature representation of the data is not available or inference with such a representation is fraught with complications. This motivates our  dissimilarity-centric approach. For an excellent resource on the usage of dissimilarities in pattern recognition, we refer the reader to the P\k{e}kalska and Duin book \cite{duin2005dissimilarity}.

		Since we proceed to inference starting from a dissimilarity representation of the data, our methodology may be applicable to any scenario in which multiple dissimilarity measures are available.  Some illustrative examples include:  pairs of images and their descriptive captions,  photographs taken under different illumination conditions. In each case, we have an intuitive notion of matchedness: for photographs taken under different illumination conditions, ``matched'' means they are of the same person. For a collection of linked Wikipedia articles, the different conditions  are  the textual content and hyperlink graph structure, ``matched'' means a text document and  a vertex  in the graph corresponds to the same Wikipedia article.

    To quantify how suitable the commensurate representation is for subsequent inference, two error criteria can be defined: \emph{fidelity}, which refers to how well the available dissimilarities in a condition are preserved, and \emph{commensurability}, which refers to how well the dissimilarities between matched objects are preserved. These two concepts will be made more concrete in section \ref{sec:FidComm}.

The major question  addressed in this paper is whether, in the tradeoff between fidelity and commensurability, there is a ``sweet spot'': increases in fidelity (or commensurability) do not result in superior performance for  the inference task, due to the resulting commensurability (or fidelity) loss.
%The existence of such an extremum obviously depends on the inference task and the performance metric. Once  we define fidelity and commensurability errors, and the optimal tradeoff between the two,  interesting questions follow:  conditions for existence , conditions for uniqueness, the selection of an estimator for the extremum point.

\section{Related Work \label{sec:RelatedWork}}

 Our problem is very similar to the 3-way multidimensional scaling problem where the dissimilarity data is a ($n \times n \times K$) tensor which represent pairwise dissimilarities between $n$ objects as measured in $K$ different conditions. However, whereas  our embedding approach  finds a separate configuration of points for each condition, 3-way MDS methods \cite{Carroll_Chang_1970,proxscal} find a \emph{single} configuration of $n$ points representing  each object (which is referred to as ``group space''), that is as consistent as possible with the dissimilarity data under different conditions. DISTATIS\cite{DISTATIS} accomplishes this goal by finding a compromise inner product matrix that is a weighted combination of the inner product matrices in different conditions. 3-way MDS methods such as \mbox{INDSCAL}\cite{Carroll_Chang_1970}, PROXSCAL\cite{proxscal} assume a common configuration in group space, from which the individual dissimilarity matrices are computed after being distorted by weight matrices.  In contrast,  when we adopt the  joint ``commensurate'' embedding approach,  the representation of the objects in the common space is never estimated.  In fact, some of our inference tasks including match detection make sense only if  we represent  each object under each condition as a distinct point. The violation of global assumptions the \mbox{INDSCAL} and PROXSCAL methods make about the group space and the weight matrices  might result in significant performance loss.  Our method puts weaker constraints on the mapping between different conditions and allows the degree of matchedness to vary for different pairs of observations. As such, it should be more robust to noise (weakly matched pairs) in the training data and we expect a more graceful decline in performance.

Another classical method  relevant to our inference task  is canonical correlation analysis (CCA)\cite{CCA,Hardoon2004}. CCA can be used to find a pair of  orthogonal projections for mapping data of each modality to  the same space.  The results for this approach are not presented herein for brevity and can be found in \cite{Adali_Thesis}. In terms of  performance for the match detection task, the CCA-based method  was very competitive with our dissimilarity-centric approach.

There have been many efforts toward solving the related problem of \,``manifold alignment". ``Manifold alignment" seeks to find correspondences between disparate datasets in different conditions (which are  sometimes referred to as ``domains'') by aligning their underlying manifolds. A common data setting found in the literature  is the semi-supervised setting\cite{Ham2005a}, where  correspondences between two collections of observations  are given and the task is to find correspondences between a new set of observations in each condition. The proposed solutions\cite{Wang2008,Zhai2010,3wayNMDS}
 follow the common approach of seeking  a common latent space  for multiple conditions  such that the representations  in this space (either projections or embeddings) of the observations match (are commensurate).

Wang and Mahedavan~\cite{Wang2008} suggest an  approach that uses separate embeddings followed by Procrustes Analysis to find maps from the original disparate data to a commensurate space. Given a paired set of points, Procrustes Analysis~\cite{Sibson}  finds a linear transformation from one set of points to the other that minimizes sum of squared distances between pairs. In the problem considered in \cite{Wang2008}, the paired set of points are low-dimensional embeddings of kernel matrices. For the embedding step, they chose to use Laplacian Eigenmaps, though their algorithm allows for any appropriate embedding method.

 Zhai et al.~\cite{Zhai2010}  solves an optimization problem  with respect to  two projection matrices for the observations in two domains.  The energy function that is optimized contains three terms: two \emph{manifold regularization terms} and one \emph{correspondence preserving term}. The  \emph{manifold regularization terms} ensure that the local neighborhood of points are preserved in the low-dimensional space, by making use of the reconstruction error for Locally Linear Embedding\cite{Roweis_LLE}.
The \emph{correspondence preserving term} ensures that ``matched'' points are mapped to close locations in the commensurate space.

Ham et al.~\cite{Ham2005a} solve the problem in the semi-supervised setting by a similar approach, by optimizing a energy function that has three terms that are analogous to the terms in ~\cite{Zhai2010}.
%to minimize three terms in an energy function %similar to the  JOFC approach (see Section \ref{sec:JOFC}).
%One of the terms is the \emph{correspondence preserving term} which is the sum of the squared distances between corresponding points and is analogous to the commensurability error term in JOFC. The other two terms are \emph{manifold regularization terms} and consist of the reconstruction error for a Locally Linear Embedding of the projected points. These terms, analogous to fidelity, make sure the projections in the lower dimension retain the structure of the original points. For fidelity error terms in the JOFC setting, this is done by preserving dissimilarities. For manifold regularization terms, this is done by preserving the local neighborhood of points, such that close points are not mapped apart.

\section{Problem Description}
In the problem setting considered here,  $n$ different objects are measured under $K$ different conditions (corresponding  to, for example, $K$ different sensors). We assume we begin with the data  available in dissimilarity representation. These will be represented in matrix form as $K$ $n \times n$ matrices $\{\Delta_k,k=1 ,\ldots,K\}$.  In addition, for each condition, dissimilarities between  a new object  and the previous
$n$ objects $\{\mathcal{D}_k,k=1 ,\ldots,K\}$ are available. Under  the null hypothesis, these new dissimilarities represent a \emph{single} new object   measured under $K$ different conditions. Under the alternative hypothesis, the dissimilarities $\{\mathcal{D}_k\}$ represent \emph{separate} new objects   measured under $K$ different conditions~\cite{JOFC}. %The test dissimilarities are referred to as  out-of-sample (OOS) dissimilarities.

For the  Wikipedia  example presented in the introduction, two dissimilarity matrices are available: the dissimilarities between articles based on their textual content ($\Delta_{1})$ and the dissimilarities between the vertices of  the hyperlink graph  ($\Delta_2)$. Various dissimilarity measures defined between pairs of graph vertices can be used to compute $\Delta_2$. The dissimilarities between the new text document and the  previous $n$  text documents $(\mathcal{D}_1)$ are also available, as well as the dissimilarities between  a new vertex added to the graph  and the previous $n$ vertices $(\mathcal{D}_2)$. The null hypothesis is that the new document and the vertex correspond to each other, while the alternative hypothesis is that they are not.

  In order to derive a data representation where dissimilarities from disparate sources ($\{\mathcal{D}_k\}$)  can be compared, the dissimilarities must be embedded in a commensurate metric space where the metric can be used to distinguish between matched and unmatched observations.

 We do not assume the dissimilarity matrices have extra properties beyond the basic definition such as the metric property.  While data from less disparate conditions should result in better performance in the inference task, our method does not restrict the type of dissimilarity measures that are used in each condition. Due to the  disparateness of conditions and the dissimilarity measures used, the dissimilarity matrices $\{\Delta_k\}$ might have different scales of magnitude. We correct for this by normalizing the scale of each  dissimilarity matrix.  A reasonable choice for doing so is to divide each $\{\Delta_k\}$ by its Frobenius norm $\|\Delta_k\|_{F}$.

To embed multiple dissimilarities  $\{\Delta_k\}$  into a commensurate space, an omnibus dissimilarity matrix  $M \in \mathbb{R}^{nk \times nk}$  is constructed. Consider, for $K=2$,
 \begin{equation}
M=  \left[ \begin{array}{cc}
         \Delta_1 & L\\
        L^T  & \Delta_2
     \end{array}  \right]     \label{omnibus}
\end{equation} where $L$ is a matrix of imputed entries to be described later.

\begin{remark}
For clarity of exposition, we will consider $K=2$; the generalization to $K>2$ is straightforward.
\end{remark}

We define the commensurate space to be  $\mathbb{R}^d$, where the embedding dimension $d$ is pre-specified. The selection of $d$ -- model selection -- is  a task that requires much attention and is  beyond the scope of this article. We should emphasize the fact  that the choice of $d$ has an impact on most of the following work.   For applications involving real data, we used the   automatic dimensionality selection heuristic in \cite{ZhuGhodsi}. This heuristic is derived from the assumption that the ordered eigenvalues $\{\lambda_i\}$ are drawn independently from  two  distributions from the same family $f(\lambda_i,\theta_1), f(\lambda_i,\theta_2)$ and optimizes the log-likelihood $$l_q(\lambda)=\sum_{i=1}^d f(\lambda_i,\theta_1)+ \sum_{i=d+1}^n f(\lambda_i,\theta_2) $$ with respect to $d$. Utilizing this heuristic provides an objective method to  determine the ``elbow'' of the scree plot without requiring visualization and inspection by a data analyst.  In the case domain knowledge is available for the data, it should be used to inform the selection of $d$. Otherwise, we recommend  to future practitioners that they select the value of $d$ such that the  data in \emph{both} modalities can be represented with small distortion in $\mathbb{R}^d$.
 %Investigation of the effect of $d$ on testing performance will be pursued in follow-on work.

 We use multidimensional scaling (MDS) \cite{borg+groenen:1997,SMACOF,CMDS} to embed  the omnibus matrix in this  space, and obtain  a configuration of $2n$ embedded points $\{\hat{x}_{ik};1 \leq i \leq n;k\in\{1,2\} \}$ (which can be represented as $\hat{X}$, a $2n \times d$ matrix, where each row of the configuration matrix is the coordinate vector of an embedded point). The discrepancy between the interpoint distances of $\{\hat{x}_{ik}\}$ and the given dissimilarities in  $M$ is made as small  as possible, as measured by an objective function  $$\sigma_W(\widetilde{X};M)=\sum_{k_1\in\{1,2\};k_2\in\{1,2\}} \sum_{1 \leq i_1 \leq n;1 \leq i_2 \leq n }w_{i_1i_2k_1k_2}(d(\hat{x}_{i_1k_1},\hat{x}_{i_2k_2})-M_{i_1i_2k_1k_2})\footnote{which is based on the raw stress criterion $\sum_{1 \leq i_1 \leq n;1 \leq i_2 \leq n}w_{i_1i_2}(d(\hat{x}_{i_1},\hat{x}_{i_2})-\delta_{i_1i_2})^2$} $$  with a $2n \times 2n$ weight matrix $W$. Each entry of $W$, $w_{i_1i_2k_1k_2}$, correspond to an  entry of $M$,$M_{i_1i_2k_1k_2}$.  In matrix form, $ \hat{X}=\arg \min_{\widetilde{X}} \sigma_{W}(\widetilde{X};M)$. This minimization problem is solved by the SMACOF algorithm\cite{SMACOF}.

%This approach will be referred to as the Joint Optimization of Fidelity and Commensurability (JOFC) approach, for reasons that will be explained in \ref{sec:FidComm}.

\begin{remark}
We will use $x_{ik}$ to denote the (possibly notional)  observation  for the $i^{th}$ object in the $k^{th}$ condition, $\tilde{x}_{ik}$ to denote an argument of the objective function  and  $\hat{x}_{ik}$  to denote the $\arg\min$  of the objective function. The notation for configuration matrices ($X,\widetilde{X},\hat{X}$), whose each row  corresponds to  the embedding coordinates of an object, follows the  same convention.
\end{remark}

  Given the omnibus matrix $M$ and the $2n \times d$ embedding configuration matrix $\hat{X}$ in the commensurate space, the out-of-sample extension~\cite{MaThesis} developed for the raw-stress MDS variant will be used to embed the test dissimilarities $\mathcal{D}_1$ and $\mathcal{D}_2$.  Once the test similarities are embedded as two points ($\hat{y}_{1},\hat{y}_{2}$) in  the commensurate space, it is possible to  compute the test statistic \[
\tau=d\left(\hat{y}_{1},\hat{y}_{2}\right)\label{teststat}
\] for the two ``objects'' represented by  $\mathcal{D}_1$ and $\mathcal{D}_2$.  For large values of $\tau$, the null hypothesis will be rejected.
   If  dissimilarities between matched objects are smaller than dissimilarities between unmatched objects with large probability, and the embeddings preserve this stochastic ordering,  we could reasonably expect the test statistic to yield large  power.
\section{Fidelity and Commensurability\label{sec:FidComm}}

Regardless of the inference task,  to expect reasonable performance from the embedded data in the commensurate space,
%we have to use both the original dissimilarity information in each separate condition \emph{and}  the matchedness of dissimilarities.
it is necessary to pay heed to these two error criteria: %adhered to:

\begin{itemize}
\item Fidelity describes how well the mapping to commensurate space preserves the original dissimilarities. The \emph{loss of fidelity} can be measured with the  within-condition \emph{ fidelity error}, given by
    \[
\epsilon_{f_{(k)}} = \frac{1}{{{n}\choose{2}}} \sum_{1 \leq i < j \leq n} (d(\widetilde{\bm{x}}_{ik},\widetilde{\bm{x}}_{jk})-\delta_{ijkk})^2
.\]
Here $\delta_{ijkk}$ is the dissimilarity between the $i^{th}$ object and the $j^{th}$ object where both objects are in the $k^{th}$  condition, and $\widetilde{\bm{x}}_{ik}$ is the embedded representation of the $i^{th}$ object  for the $k^{th}$ condition;  $d(\cdot,\cdot)$ is the Euclidean distance function.

\item Commensurability describes how well the mapping to commensurate space preserves matchedness of matched observations. The \emph{loss of commensurability} can be measured by the between-condition {\em commensurability error} which is given by
    \[
\epsilon_{c_{(k_1,k_2)}} = \frac{1}{n} \sum_{1 \leq i \leq n;k_1 <k_2} (d(\widetilde{\bm{x}}_{ik_1},\widetilde{\bm{x}}_{ik_2})- { \delta_{iik_1k_2}})^2
\label{comm-error}
\]
 for conditions $k_1$ and $k_2$; $\delta_{iik_1k_2}$  is the dissimilarity between the $i^{th}$ object under  conditions   $k_1$ and  $k_2$.
Although  the between-condition dissimilarities of the same object, ${ \delta_{iik_1k_2}}$, are not available,  it is reasonable to set these dissimilarities to $0$ for all $i,k_1,k_2$. These dissimilarities correspond to  diagonal  entries of the  submatrix $L$ in  the omnibus matrix  $M$ in equation \eqref{omnibus}. Setting these diagonal entries to $0$ forces matched observations to be embedded close to each other. \label{commens}
\end{itemize}

While  the above expressions for  \emph{fidelity} and  \emph{commensurability} errors  are specific to the joint embedding of disparate dissimilarities, the concepts of fidelity and commensurability are  general enough to be applicable to other dimensionality reduction methods for data from disparate sources.

 In addition to fidelity and commensurability, there is the \emph{separability} criteria:  dissimilarities between unmatched  observations in different conditions  should be preserved (so that unmatched pairs are not embedded close together).
% The error for this criteria can be measured by  $\epsilon_{s_{(k_1,k_2)}} = \frac{1}{{{n}\choose{2}}} \sum_{1 \leq i < j \leq n;k_1 <k_2} (d(\widetilde{\bm{x}}_{ik_1},\widetilde{\bm{x}}_{jk_2})-{ \delta_{ijk_1k_2}})^2$ for  conditions   $k_1$ and  $k_2$.

Let us now show how fidelity and commensurability errors  can be made explicit in the objective function. Consider the weighted raw stress criterion ($\sigma_{W}(\cdot)$) which we choose as the objective function for the embedding  of  $M$ with a weight matrix $W$. The entries of $M$ are $\{M_{ijk_1k_2}\}$ for the available dissimilarities ($k_1=k_2$). As the between-condition dissimilarities,
$\{M_{ijk_1k_2}\}$ for $i\neq j$, are  not available in general, the entries corresponding to the unavailable dissimilarities can be imputed\footnote{An effective  imputation method is highly dependent on the nature of the disparate dissimilarities. In the case where the dissimilarities have  the same order of magnitude and a high degree of  matchedness, at the very least, the average of the two  dissimilarities will have the right  order of magnitude. It is also possible to disregard these dissimilarities altogether except those between  matched observations in different conditions.} as $
M_{ijk_1k_2}=\frac{M_{ijk_1k_1}+M_{ijk_2k_2}}{2} \label{impute}$. Then the objective function is
\begin{equation}
\sigma_{W}(\widetilde{X};M)=\sum_{i\leq j,k_1\leq k_2} {w_{ijk_1k_2}(D_{ijk_1k_2}(\widetilde{X})-M_{ijk_1k_2})^2  }\label{raw-stress}.
\end{equation}
 Here, $ijk_1k_2$ subscript of a partitioned matrix refers to the entry in the $i^{th}$ row and $j^{th}$ column of the sub-matrix in $k_1^{th}$ row partition and $k_2^{th}$ column partition, $W$ is the weight matrix, $\widetilde{X}$ is the configuration matrix that is the argument of the stress function, $D(\cdot)$ is the matrix-valued function whose outputs are the Euclidean distances between the rows of its matrix argument.   \emph{Each of the individual terms in the sum \textrm{(\ref{raw-stress})} can be ascribed to fidelity, commensurability or separability}. %\footnote{M_{ijk_1k_2} is just a reindexing of M_{st}, k_1,k_2 are row/column indices of the block matrix (i,j)  are  row/column indices of the matrix entry, },

\begin{align}
\sigma_W(\cdot;M)  &=  \sum_{i,j,k_1,k_2} \underbrace{{w_{ij{k_1}{k_2}}(D_{ij{k_1}{k_2}}(\cdot)-M_{ijk_1k_2})^2 }}_{term_{i,j,k_1,k_2}}  & \notag\\
\hspace{3pt} &=  \underbrace{\sum_{i=j,k_1<k_2}  term_{i,j,k_1,k_2}}_{Commensurability}  \hspace{10pt}    +  \underbrace{\sum_{i<j,k_1=k_2}   term_{i,j,k_1,k_2}  } _{Fidelity}
\hspace{3pt}+  \underbrace{\sum_{i< j,k_1<k_2}  term_{i,j,k_1,k_2}  } _{Separability}\label{eq:FidCommSep}\hspace{10pt} &.
\end{align}

\begin{remark}
   Due to the fact that data sources are ``disparate", it is not obvious how  a dissimilarity between an object in one condition and another object in another condition  can be computed or  defined in a sensible way. Although these unavailable dissimilarities appearing in the separability term can be imputed as mentioned,  MDS variants such as weighted raw stress allows for disregarding the between-condition dissimilarities, by setting the corresponding weights  in the raw stress function  to 0.   By doing so, we restrict our attention to the fidelity-commensurability tradeoff.
\end{remark}

 As mentioned in description of commensurability, we set the between-condition dissimilarities of the same object ($\{M_{iik_1k_2}\}$) to $0$. Then the raw stress function can be written as
\begin{align}
\sigma_W(\widetilde{X};M)\hspace{3pt}
\hspace{3pt}&=&\underbrace{\sum_{i=j,k_1< k_2}  {w_{ij{k_1}{k_2}}(D_{ij{k_1}{k_2}}(\widetilde{X}))^2}}_{Commensurability}  \hspace{10pt}  &  +&\underbrace{\sum_{i< j,k_1=k_2}  {w_{ij{k_1}{k_2}}(D_{ij{k_1}{k_2}}(\widetilde{X})-M_{ijk_1k_2})^2  }  } _{Fidelity}\notag\label{eq:FidCommSep_final}\hspace{10pt}.
\end{align}
This motivates  the naming of the   omnibus embedding approach as Joint Optimization of Fidelity and Commensurability (JOFC).

 The weights in the raw stress function allow us to address the question of the optimal tradeoff of  fidelity and commensurability. Let $w \in (0,1)$. Setting the weights ($\{w_{ijk_1k_2}\}$)  for the commensurability  and fidelity  terms    to $w$ and $1-w$, respectively,  will allow us to control the relative importance of fidelity and commensurability terms in the objective function.

 Let us denote the raw stress function with these simple weights by $\sigma_w(\widetilde{X};M)$. With simple weighting, when $w=0.5$, all terms in the objective function have the same weights. We will refer to this weighting scheme as \emph{uniform weighting}. Uniform weighting does not necessarily yield the best fidelity-commensurability tradeoff in terms of subsequent inference.

 Previous investigations of the JOFC approach \cite{JOFC,MaThesis} did not consider the effect of non-uniform weighting.
Our thesis is that using non-uniform weighting  in the objective function will allow for superior performance.
That is, for a given inference task there is an optimal $w$ for inference, denoted by $w^*$, and in general $w^* \neq 0.5$.
In particular, as our inference task, we consider hypothesis testing, as in \cite{JOFC},
and we let the area under the ROC curve, $AUC(w)$, be our measure of performance for any $w \in [0,1]$.
In this setting, we show that $AUC(w)$ is continuous in the interval $(0,1)$, and hence $w^* = \arg\max_{w \in (\epsilon,1-\epsilon)} AUC(w)$ exists for arbitrarily small $\epsilon$.
We demonstrate the potential practical advantage of our weighted generalization of JOFC via simulations.

\section{Definition of  $w^{*}$}

\begin{remark}
In our notation, $(.)$  denotes  either $(m)$  or   $(u)$. In the former case, an expression refers to values under  ``matched'' hypothesis, in the latter, the expression refers to values under  ``unmatched''   hypothesis.
\end{remark}
Let us denote the test dissimilarities ($\mathcal{D}_1$, $\mathcal{D}_2$)  by  ($\mathcal{D}_1^{(m)}$, $\mathcal{D}_2^{(m)}$)  under the  ``matched'' hypothesis, and  by ($\mathcal{D}_1^{(u)}$, $\mathcal{D}_2^{(u)}$)  under the alternative. The out-of-sample embedding of ($\mathcal{D}_1^{(m)}$, $\mathcal{D}_2^{(m)}$) involves the  augmentation of  the omnibus matrix $M$, which consists of $n$ matched  pairs of dissimilarities,  with ($\mathcal{D}_1^{(m)}$, $\mathcal{D}_2^{(m)}$). The resulting augmented  $(2n+2)\times (2n+2)$ matrix  has the form:

 \begin{equation}
\Delta^{(m)}=  \left[ \begin{array}{cccc}
          \multicolumn{2}{c}{\multirow{2}{*}{\Huge{$M$}}} &  \mathcal{D}_1^{(m)} &\vec{\mathcal{D}}_{NA}\\
        & &  \vec{\mathcal{D}}_{NA}   & \mathcal{D}_2^{(m)} \\
				\mathcal{D}_1^{(m)T} & \vec{\mathcal{D}}_{NA}^T  &  0 & \mathcal{D}_{NA} \\
         \vec{\mathcal{D}}_{NA}^T & \mathcal{D}_2^{(m)T} & \mathcal{D}_{NA} &0\\
     \end{array}  \right].     \label{aug_omnibus}
\end{equation}  where
the scalar $\mathcal{D}_{NA}$ and    $\vec{\mathcal{D}}_{NA}$ (a vector of NAs that has length $n$)   represent dissimilarities that are not available.
In our JOFC procedure, these unavailable entries in $\Delta^{(m)}$ are either imputed using other dissimilarities that are available, or ignored in the embedding optimization. For a simpler  notation, let us assume it is the former case. Also note that $\Delta^{(u)}$  has the same form as $\Delta^{(m)}$ where $\mathcal{D}_k^{(m)}$ is replaced by $\mathcal{D}_k^{(u)}$.

We define the dissimilarity matrices \{$\Delta^{(m)},\Delta^{(u)}$\} to be  two matrix-valued random variables: $\Delta^{(m)}:\Omega \rightarrow \mathbf{M}_{(2n+2)\times (2n+2)} $ and  $\Delta^{(u)}:\Omega \rightarrow \mathbf{M}_{(2n+2)\times (2n+2)} $) for the appropriate sample  space $(\Omega)$.
\begin{remark}
Suppose the objects in $k^{th}$  condition  can be represented as points in a measurable space $\Xi_k$, and the dissimilarities in $k^{th}$ condition are given by  a dissimilarity measure $\delta_k$ acting on pairs of points in $\Xi_k$. Assume $\mathcal{P}_{(m)}$ is the joint probability distribution over matched objects, while the joint distribution of unmatched objects \{$k=1,\ldots,K$\}  is $\mathcal{P}_{(u)}$. Assuming the data are i.i.d., under the two hypotheses (``matched'' and ``unmatched'', respectively), the $n+1$ pairs of objects are governed  by the product distributions $\{\mathcal{P}_{(m)}\}^n \times \mathcal{P}_{(m)} $ and $\{\mathcal{P}_{(m)}\}^n \times \mathcal{P}_{(u)} $.  The distributions of $\Delta^{(m)}$ and $\Delta^{(u)}$ are the induced probability distributions of  these product distributions (induced by the  dissimilarity measure $\delta_k$ applied to  objects in $k^{th}$ condition \{$k=1,\ldots,K$\}).
\end{remark}

%is embedded in $d$-dimensional space with the constraint that the embedding of the first $2n$ points is the same as the embedding of $M$:

 We now consider the embedding of $\Delta^{(m)}$ and $\Delta^{(u)}$ with the criterion function  $\sigma_W(\widetilde{X}; \Delta^{(.)})$. The arguments of the function are  $\widetilde{X}= \left[
\begin{array}{c}
{\widetilde{\mathcal{T}}} \\
\widetilde{y}_{1}^{(.)} \\
\widetilde{y}_{2}^{(.)}
\end {array}
\right]$ where ${\widetilde{\mathcal{T}}}$ is the argument for the in-sample embedding of the first $n$ pairs of matched points, and
 $\{\widetilde{y}_{1}^{(.)} \}$ and $\{\widetilde{y}_{2}^{(.)} \}$ are the arguments for the embedding coordinates of the matched  or unmatched pair,
and the omnibus dissimilarity matrix $\Delta^{(.)}$ is equal to  $\Delta^{(m)}$  (or $\Delta^{(u)}$) for the embedding of the  matched (unmatched) pair. Note that we use the simple weighting scheme, so with a slight abuse of notation, we rewrite the criterion function as  $\sigma_w(\widetilde{X}; \Delta^{(.)})$ where $w$ is a scalar parameter.
The embedding coordinates for the matched or unmatched pair  ${\hat{y}_{1}^{(.)},\hat{y}_{2}^{(.)}}$ are given by
 \[
{\hat{y}_{1}^{(.)},\hat{y}_{2}^{(.)}}
=\argmin_{\widetilde{y}_{1}^{(.)}, \widetilde{y}_{2}^{(.)}}\left[\min_{\widetilde{\mathcal{T}}}
{\sigma_w\left(
\left[
\begin{array}{c}
{\widetilde{\mathcal{T}}} \\
\widetilde{y}_{1}^{(.)} \\
\widetilde{y}_{2}^{(.)}
\end {array}
\right]
,
\Delta^{(.)}
\right)
}
\right].
\]

\begin{remark}
 Note that the in-sample embedding of $\widetilde{\mathcal{T}}$ is necessary but irrelevant for the inference task, hence the minimization with respect to $\widetilde{\mathcal{T}}$ is denoted by  $\min$ instead $\argmin$. It can be considered as a nuisance parameter for our hypothesis testing.
\end{remark}
\begin{remark}
 Note also that  all of the random variables following the embedding, such as $\{\hat{y}_{k}^{(.)}\}\!$,  are dependent on $w$; for the sake of simplicity, this will  be suppressed in the notation.
\end{remark}

 Under reasonable assumptions, the embeddings $\Delta^{(m)} \rightarrow  \{\hat{y}_{1}^{(m)},\hat{y}_{2}^{(m)}\!\}$  and $\Delta^{(u)}\rightarrow \{\hat{y}_{1}^{(u)} , \hat{y}_{2}^{(u)}\}$ are measurable maps for all $w \in (0,1)$ \cite{measurable_Niemiro1992}. Then, the distances between the embedded points are random variables and we can define the test statistic $\tau$ as $d(\hat{y}_{1}^{(m)},\hat{y}_{2}^{(m)})$ under the null hypothesis  and $d(\hat{y}_{1}^{(u)},\hat{y}_{2}^{(u)})$ under the alternative. Under the null hypothesis, the distribution of the statistic is governed by the distribution of $\hat{y}_{1}^{(m)}$ and $\hat{y}_{2}^{(m)}$; under the alternative it is governed by  the distribution of $\hat{y}_{1}^{(u)}$ and $\hat{y}_{2}^{(u)}$.

 Then, the statistical power as a function of $w$ is given by  \[\beta\left( w,\alpha\right)=1-F_{d \left(\hat{y}_{1}^{(u)},\hat{y}_{2}^{(u)}\right)} \left(F_{d\left(\hat{y}_{1}^{(m)},\hat{y}_{2}^{(m)}\right)}^{-1}(1-\alpha) \right)\] where $F_Y$ denotes  the   cumulative distribution function of  $Y$. The area under curve (AUC) measure  as a function of $w$ is defined as
\begin{equation}
AUC(w)=\int_{0}^{1}\! \beta\left( w,\alpha\right)\,\mathrm{d}\alpha \; . \label{AUC_def}
\end{equation}
%\footnote{
Although we might care about optimal $w$ with respect to  $\beta\left( w,\alpha\right)$ (with a fixed Type I error rate $\alpha$),  it will be more convenient to define $w^*$ in terms of the AUC function.
%}

 Finally, define $$w^{*}=\arg\max_w{AUC\left( w\right)}. $$

 Some important questions about $w^*$ are  related to the nature of the AUC function.
While finding an analytical expression for the value of $w^*$ is intractable, an estimate $\hat{w}^*$  based on  estimates of $AUC(w)$ %$\beta_{\alpha}(w^*)$
 can be computed.  For the Gaussian setting described in  section \ref{subsec:GaussianSet}, a Monte Carlo simulation is performed  to find the estimate of $AUC(w)$ for different values of $w$.
% $\beta_{\alpha}\left( w\right)$  at various values of $\alpha$ ,
%which can be used to compute  values .

\subsection{Continuity of $AUC(\cdot)$}
 Let $T_0(w)=d(\hat{y}_{1}^{{(m)}},\hat{y}_{2}^{{(m)}})$ and $T_a(w)=d(\hat{y}_{1}^{(u)},\hat{y}_{2}^{(u)})$ denote the value of the test statistic under null and alternative distributions  for the embedding with the simple weighting $w$.  %stochastic process whose sample path is  a function of $w$ where the randomness comes from $\Delta^{(m)}$, $\Delta^{(u)}$ and %$\mathcal{T}$ .
%Consider $\beta_{\alpha}(\cdot)$ as a function of $w$, which can be written as $P\left[T_a(\cdot)>c_{\alpha}(\cdot)\right]$ where $c_{\alpha}(\cdot)$ is the critical value for level $\alpha$. Instead of  $\beta_{\alpha}(\cdot)$
The AUC function can be written as $$AUC(w)=P\left[T_a(w)>T_0(w)\right]$$ where $T_a(\cdot)$ and $T_0(\cdot)$  can be regarded as  stochastic processes whose sample paths are functions  of $w$.  We will prove that $AUC(w)$ is continuous with respect to $w$.
% except at a finite number of points in $(0,1)$.
We start with this lemma from \cite{Raik1972}.

\begin{lem}
Let $z$ be a random variable. The functional $g(z;\gamma) = P\left[z \geq \gamma \right]$ is upper semi-continuous in probability with respect to $z$. Furthermore, if $P\left[z = \gamma \right]=0$, $g(z;\gamma)$ is continuous in probability with respect to $z$. \label{lemma_1}
\end{lem}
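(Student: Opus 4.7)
The plan is to deduce both claims from the portmanteau characterization of weak convergence, which in this one-dimensional setting amounts to the elementary behavior of CDFs under convergence in distribution.

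First, I would unpack the definition: ``upper semi-continuous in probability'' should mean that whenever a sequence $z_n$ converges to $z$ in probability, one has $\limsup_n g(z_n;\gamma) \leq g(z;\gamma)$, and ``continuous in probability'' upgrades this to $\lim_n g(z_n;\gamma) = g(z;\gamma)$. So let $z_n \to z$ in probability. Since convergence in probability implies convergence in distribution, $z_n \Rightarrow z$.

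Second, I would invoke the portmanteau theorem applied to the closed half-line $F = [\gamma,\infty)$. One of the equivalent formulations of $z_n \Rightarrow z$ is that for every closed set $F \subseteq \mathbb{R}$,
\[
\limsup_{n\to\infty} P[z_n \in F] \;\leq\; P[z \in F].
\]
Specializing to $F = [\gamma,\infty)$ gives $\limsup_n P[z_n \geq \gamma] \leq P[z \geq \gamma]$, i.e.\ $\limsup_n g(z_n;\gamma) \leq g(z;\gamma)$, which is upper semi-continuity in probability.

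For the second statement, the hypothesis $P[z=\gamma]=0$ says that $\gamma$ is a continuity point of the CDF of $z$, equivalently that the boundary $\{\gamma\}$ of the set $[\gamma,\infty)$ has zero mass under the law of $z$. The portmanteau theorem then yields the stronger conclusion that $P[z_n \in [\gamma,\infty)] \to P[z \in [\gamma,\infty)]$, which is exactly $g(z_n;\gamma) \to g(z;\gamma)$, giving continuity in probability.

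I do not expect a serious obstacle here: the only subtle point is pinning down the intended definition of ``in probability'' continuity for a functional of a random variable, and once that is settled the argument is a one-line application of portmanteau to closed versus continuity sets. If the authors mean something stronger (e.g.\ a quantitative bound in terms of a metric like $d_{\rm Ky}(z_n,z)$ for the L\'evy--Prokhorov metric), the same portmanteau reasoning still applies, but I would then add a brief remark converting the weak-convergence statement into the metric form via the standard equivalence between $d_{\rm Ky}(z_n,z) \to 0$ and $z_n \Rightarrow z$.
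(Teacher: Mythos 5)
Your proof is correct, but it takes a genuinely different route from the paper's. You reduce both claims to the portmanteau theorem: since convergence in probability implies convergence in distribution, the closed-set form of portmanteau applied to $F=[\gamma,\infty)$ gives $\limsup_n P[z_n\geq\gamma]\leq P[z\geq\gamma]$ (upper semi-continuity), and the continuity-set form gives the full limit when $P[z=\gamma]=0$, since that hypothesis says exactly that the boundary $\{\gamma\}$ of $[\gamma,\infty)$ carries no mass under the law of $z$. The paper instead argues from first principles: it fixes $\delta,\epsilon>0$ from the definition of convergence in probability, exploits monotonicity and left-continuity of $\gamma\mapsto P[z\geq\gamma]$ to compare $g(z;\gamma)$ with $g(z;\gamma-\delta)$, and bounds the difference $P[z_n\geq\gamma]-P[z\geq\gamma-\delta]$ by $P[z_n-z\geq\delta]\leq\epsilon$ through explicit set inclusions; right-continuity at $\gamma$ (which is equivalent to $P[z=\gamma]=0$) then upgrades the $\limsup$ statement to a genuine limit via the symmetric inequalities. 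Your version is shorter, delegates the measure-theoretic bookkeeping to a standard theorem, and makes the role of the hypothesis $P[z=\gamma]=0$ completely transparent; the paper's version is self-contained and elementary, never invoking weak convergence, at the price of a somewhat delicate chain of tail-probability inequalities. Both prove the same sequential statement, and your unpacking of ``(upper semi-)continuous in probability'' as behavior along sequences $z_n\to z$ in probability matches the paper's intended meaning.
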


\begin{proof}
Suppose $z_n$ converges to $z$ in probability. Then by definition, for any  $\delta>0$ and  $\epsilon>0$, $\exists	N\in\mathbb{Z^+}$ such that for all   $n \geq N$
$$ Pr\left[\left|z_n-z\right| \geq \delta \right] \leq \epsilon.$$

 The functional  $g(z;\gamma)$ is  non-increasing with respect to $\gamma$. Therefore, for $\delta>0$,
$g(z_n;\gamma) -g(z;\gamma) \geq g(z_n;\gamma) -g(z;\gamma-\delta) $. Furthermore, $g(z;\gamma)$ is left-continuous with respect to $\gamma$, so the difference between the two sides of the inequality can be made as small as desired.

\begin{eqnarray}
g(z_n;\gamma) - g(z;\gamma-\delta) & = &Pr\left[z_n\geq \gamma \right] -Pr\left[z \geq  \gamma - \delta \right] \label{prob_defn}\\
& \leq &  Pr\left[\{z_n \geq \gamma \} \backslash \{z \geq \gamma - \delta \} \right] \label{set_diff}\\
& \leq & Pr\left[\{\{z_n \geq \gamma \} \backslash \{z \geq \gamma - \delta \} \} \cap \{z_n \geq  z\} \right] \label{conjunct_with_set} \\
& =  & Pr\left[\{z_n - z \geq \delta \} \right] \leq \epsilon. \label{upper_semicont}
\end{eqnarray}

Since $\epsilon$ and $\delta$ are arbitrary,
 $ \limsup_{n \rightarrow \infty} ( {g(z_n;\gamma)}- g(z;\gamma) ) =  0$ for any $\delta>0, $ i.e. $g(z;\gamma)$ is upper semi-continuous.

By arguments  symmetric to  \eqref{prob_defn}-\eqref{upper_semicont}, we can show that

\begin{equation}
g_(z;\gamma+\delta) - g(z_n;\gamma) \leq \epsilon. \label{lower_semicont}
\end{equation}

In addition, assume that  $P\left[z = \gamma \right]=0$. Then, $g(z;\gamma)$ is also right-continuous with respect to $\gamma$. Therefore,
$g(z_n;\gamma) -g(z;\gamma) \leq g(z_n;\gamma) -g(z;\gamma+\delta)$ and the difference between the two sides of the inequality can be made as small as possible.
Along with \eqref{lower_semicont}, this means that

 \[
\liminf_{n \rightarrow \infty} ( {g(z_n;\gamma)}- g(z;\gamma) ) = 0.
\] Therefore, $\lim_{n\rightarrow \infty}g(z_n;\gamma) = g(z;\gamma)$, i.e. $g(z;\gamma)$ is continuous in probability with respect to $z$.
\end{proof}

\begin{thm} \label{main_thm}
Let $T(w)$ be  a stochastic process indexed by $w$ in the interval (0,1). Assume  the process is continuous in probability  (stochastic continuity)   at $w=w_0$,  i.e.
\begin{equation} \forall a>0 \quad  \lim_{s \rightarrow w_0} Pr\left[\left|T(s)-T(w_0) \right| \geq a \right] = 0
\end{equation}
for $ w_0\in (0,1)$. Furthermore, assume that $Pr\left[T(w_0)=0\right]=0$.

Then, $Pr \left[ T(w) \geq 0\right]$ is continuous at $w_0$.
\end{thm}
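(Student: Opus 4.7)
The statement is essentially the specialization of Lemma~\ref{lemma_1} to $\gamma=0$, composed with the standard trick of witnessing continuity via arbitrary sequences. My plan is to apply Lemma~\ref{lemma_1} with the random variable $z$ replaced by $T(w_0)$ and the approximating sequence $z_n$ replaced by $T(w_n)$ for an arbitrary sequence $w_n\to w_0$ in $(0,1)$, then conclude continuity of the deterministic map $w\mapsto\Pr[T(w)\geq 0]$ at $w_0$.

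First I would pass from the hypothesis of stochastic continuity (a statement about the continuous parameter $w$) to sequential convergence in probability. Fix a sequence $w_n\to w_0$ in $(0,1)$ and pick any $a>0$ and $\epsilon>0$. Stochastic continuity at $w_0$ supplies a $\delta>0$ such that $\Pr[\,|T(s)-T(w_0)|\geq a\,]<\epsilon$ whenever $|s-w_0|<\delta$. Choosing $N$ so large that $|w_n-w_0|<\delta$ for all $n\geq N$ yields $\Pr[\,|T(w_n)-T(w_0)|\geq a\,]<\epsilon$ for $n\geq N$, which is exactly the definition of $T(w_n)\to T(w_0)$ in probability. (Since $w_0$ lies in the open interval $(0,1)$, I may assume $w_n\in(0,1)$ without loss of generality.)

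Next I would verify the remaining hypothesis of Lemma~\ref{lemma_1}. Taking $\gamma=0$, the no-atom condition $\Pr[z=\gamma]=0$ becomes $\Pr[T(w_0)=0]=0$, which is assumed in the theorem. Applying Lemma~\ref{lemma_1} to $z_n:=T(w_n)$ and $z:=T(w_0)$ with $\gamma=0$ gives
\[
\lim_{n\to\infty} g\bigl(T(w_n);0\bigr)=g\bigl(T(w_0);0\bigr),
\]
i.e.\ $\Pr[T(w_n)\geq 0]\to \Pr[T(w_0)\geq 0]$. Since the sequence $w_n\to w_0$ was arbitrary, the function $w\mapsto \Pr[T(w)\geq 0]$ is continuous at $w_0$.

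I do not expect a genuine obstacle here: the theorem is a corollary of Lemma~\ref{lemma_1}, and the only real content of the proof is the routine bookkeeping step that reduces continuity-in-$w$ to sequential convergence in probability. The one place worth being explicit is why the no-atom hypothesis is needed in precisely this form: Lemma~\ref{lemma_1} only provides upper semicontinuity of $g(\cdot;\gamma)$ in general, and the matching lower-semicontinuity bound (inequality \eqref{lower_semicont} combined with right-continuity of $g(z;\cdot)$ at $\gamma$) requires $\Pr[T(w_0)=\gamma]=0$; without this, one could only assert upper semicontinuity of $w\mapsto\Pr[T(w)\geq 0]$ at $w_0$.
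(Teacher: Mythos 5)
Your proposal is correct and follows essentially the same route as the paper's own proof: fix an arbitrary sequence $w_n \to w_0$, set $z_n = T(w_n)$, $z = T(w_0)$, $\gamma = 0$, check that stochastic continuity gives convergence in probability and that the no-atom hypothesis supplies the condition $\Pr[z=\gamma]=0$, then invoke Lemma~\ref{lemma_1}. Your version is slightly more explicit about the $\epsilon$--$\delta$ reduction from continuity in $w$ to sequential convergence in probability, but the argument is the same.
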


\begin{proof}
Consider any sequence $w_n \rightarrow w_0$. Let $z_n = T(w_n)$ and  $z=T(w_0)$ and choose $\gamma=0$. Since $T(w)$ is continuous in probability at $w_0$ and $Pr\left[T(w_0)=0\right]=0$, conditions for Lemma \ref{lemma_1} hold, i.e.\hspace{-.5em} as $w_n\rightarrow w_0$, $z_n$ converges in probability to $z=T(w_0)$. By  Lemma \ref{lemma_1}, we conclude  $g(T(w_n); 0 ) = Pr \left[ T(w_n) \geq 0\right]$ converges to $g(T(w_0);0)$. Therefore $g(T(w);0)$ is continuous with respect to $w$.
\end{proof}

\begin{cor}{
 If $Pr[T_a(w)-T_0(w)=0]=0$ and $T_a(w)$, $T_0(w)$ are continuous in probability for all $w \in (0,1)$, then $AUC(w)=Pr\left[T_a(w)-T_0(w) >0 \right]$ is continuous with respect to $w$  in the interval $(0,1)$.}
\end{cor}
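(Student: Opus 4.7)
The plan is to reduce the corollary to Theorem \ref{main_thm} by setting $T(w) := T_a(w) - T_0(w)$, so that $AUC(w) = \Pr[T_a(w) - T_0(w) > 0]$ becomes (up to the null atom at zero) exactly $\Pr[T(w) \geq 0]$, the functional whose continuity is guaranteed by the theorem. The two hypotheses of the theorem I need to verify for this $T$ are: (i) $\Pr[T(w_0) = 0] = 0$ for every $w_0 \in (0,1)$, which is precisely the first assumption of the corollary; and (ii) $T(w)$ is continuous in probability at every $w_0 \in (0,1)$.

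For (ii), the key step is to show that stochastic continuity is preserved under pointwise subtraction of stochastic processes. Given any $a > 0$ and any sequence $w_n \to w_0$, I use the triangle-inequality inclusion
\[
\{|T(w_n) - T(w_0)| \geq a\} \subseteq \{|T_a(w_n) - T_a(w_0)| \geq a/2\} \cup \{|T_0(w_n) - T_0(w_0)| \geq a/2\},
\]
take probabilities, apply subadditivity, and let $n \to \infty$. The stochastic continuity of $T_a$ and $T_0$ at $w_0$, which is assumed, sends each term on the right to zero, establishing stochastic continuity of $T$.

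Having verified both hypotheses, I invoke Theorem \ref{main_thm} with this $T$ to conclude that $w \mapsto \Pr[T(w) \geq 0]$ is continuous on $(0,1)$. Finally, because $\Pr[T(w) = 0] = 0$ by assumption, we have $\Pr[T(w) > 0] = \Pr[T(w) \geq 0]$, so $AUC(w)$ equals this continuous function and is itself continuous on $(0,1)$.

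The main potential obstacle is a subtle one: Theorem \ref{main_thm} and Lemma \ref{lemma_1} require the atomlessness condition at the single threshold $\gamma = 0$ for the limit process $T(w_0)$, not a uniform condition, so I should confirm that verifying $\Pr[T(w_0) = 0] = 0$ at each fixed $w_0 \in (0,1)$ is enough for pointwise continuity of $AUC$ on $(0,1)$ (it is, since continuity is a pointwise property). Beyond this, the argument is routine: the only non-cosmetic work is the triangle-inequality union bound for stochastic continuity of the difference, everything else is bookkeeping on top of the already-proved theorem.
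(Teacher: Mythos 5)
Your proof is correct and takes essentially the same route as the paper: define $T(w) = T_a(w) - T_0(w)$ and invoke Theorem \ref{main_thm}. The paper's own proof is a one-line reduction that silently assumes both the stochastic continuity of the difference process and the identity $\Pr[T(w)>0]=\Pr[T(w)\geq 0]$; your triangle-inequality union bound and the atom-at-zero remark simply make explicit the bookkeeping the paper omits.
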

\begin{proof}

Let $T(w)=T_a(w)-T_0(w).$ Then Theorem \ref{main_thm} applies everywhere in the interval (0,1).
\end{proof}

In any closed interval that is a subset of $(0,1)$, the AUC function is continuous and therefore attains its global maximum in that closed interval.

 We do not have closed-form expressions for the null and alternative distributions of the test statistic $\tau$ (as a function of  $w$), so we cannot provide a rigorous proof of the uniqueness of $w^*$. However, for various data settings, simulations always resulted in \emph{unimodal}  estimates for the AUC function which indicates a unique $w^*$ value.

\section{Simulation Results\label{sec:Simulation Results}}
\subsection{Gaussian setting\label{subsec:GaussianSet}}

  Let $n$ ``objects" be represented  by  $\bm{\alpha}_i \sim^{iid} \mathcal{N}(\bm{0},I_p)$.  Let the $K=2$ measurements for the $i^{th}$ object under the different conditions ($k\in(1,2)$) be denoted  by $\bm{x}_{ik}  \sim^{iid} \mathcal{N}(\bm{\alpha_i},\Sigma)$. The covariance matrix
  $\Sigma$ is a positive-definite $p\times p$ matrix whose maximum eigenvalue is   $\frac{1}{r} $. See Figure~\ref{fig:Fig1}.

 Dissimilarities ($\Delta_1$ and  $\Delta_2$) for the omnibus embedding are the Euclidean distances between the measurements in the same condition.

The parameter $r$ controls the variability between ``matched" measurements. If $r$ is large, it is expected that the distance between matched measurements
$\bm{x}_{i1}$ and $\bm{x}_{i2}$ is stochastically smaller than $\bm{x}_{i1}$ and $\bm{x}_{i'2}$ for $i \neq i'$ ; if $r$ is small, then dissimilarities  between pairs of ``matched"  measurements and
``unmatched'' are less distinguishable. Therefore, a smaller value of $r$ makes the decision problem harder, as  the test statistic under null and alternative will have highly similar distributions, resulting in  higher rate of errors or tests with smaller AUC measure.
%power for fixed type I error rate $\alpha$.

    \begin{figure}
	\begin{center}
    \includegraphics[scale=1]{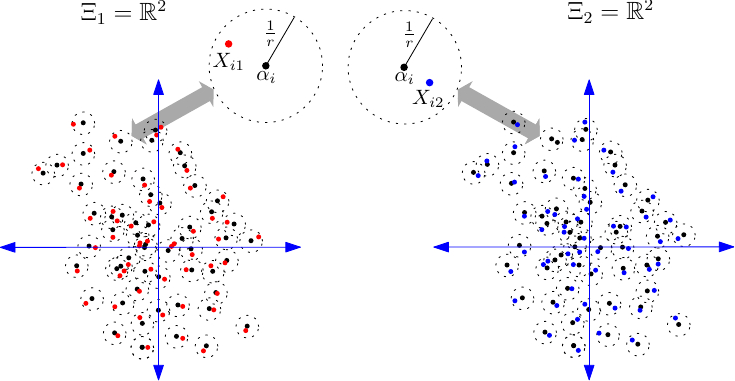}
    \caption{For the  Gaussian setting (Section \ref{subsec:GaussianSet}), the objects can be represented by $\bm{\alpha_i}$  which are two-dimensional random vectors denoted by black points and distributed as $\mathcal{N}(\bm{0},I_p)$. The dashed lines show the equal probability contours for each  $\bm{\alpha_i}$. Since  the measurements in the two conditions and the original object   are in the same space ($\mathbb{R}^2$), $\bm{\alpha_i}$ can be shown along with the measurements  $\bm{x}_{ik}$ which are denoted by red ($k=1$) and blue ($k=2$) points respectively.}
\label{fig:Fig1}
	\end{center}
  \end{figure}

% - given the embedded configuration $X$ of the
%training observations and the augmented dissimilarity matrix that includes dissimilarities between test observations and %the training observations, and dissimilarities in between the training observations, oos-embedding consists of embedding
%the test points into  the existing configuration so as to be as consistent as possible with these dissimilarities (the distances %between points are as close as possible to the dissimilarities as measured by the criterion function).
\subsection{Simulation\label{subsec:sim}}

We generate the training data of matched sets of measurements according to  the Gaussian setting. Dissimilarity representations are computed from pairwise Euclidean distances of these measurements. We also generate a set of matched pairs and unmatched pairs of measurements for testing using the same Gaussian setting. Following the out-of-sample embedding of the test dissimilarities
%(computed via by one of the three P$\circ $M, CCA and JOFC approaches),
we compute test statistics  for matched and unmatched pairs. This allows us to compute the empirical power  at different values of $\alpha$ (Type I error rate)  and the empirical AUC measure.

%The signal  and noise dimensions ($p$ and $q$) were chosen as 5 and 10, respectively.
 The measurements for the Gaussian setting are vectors in $p$-dimensional Euclidean space ($p$=5). For $nmc=400$ Monte Carlo replicates,  $n=150$ matched training pairs and $m=250$ matched and unmatched test pairs (generated according to the Gaussian setting) were generated. Using the resulting test statistic values for matched and unmatched test pairs, the AUC measure was computed for different $w$ values along with the average of the power ($\beta$) values at  different values of $\alpha$.

 The plot in Figure \ref{fig:MVN-c0-power-alpha} shows the  $\beta$ vs $\alpha$ curves for different values of  $w$. It is clear from the plot that $w$ has a significant effect on statistical power ($\beta$). There are several $w$ values in the range $(0.85,0.95)$ that result in power values that are close to optimal, and statistical power declines as $w \rightarrow 0$ or as $w \rightarrow 1$.
Also, note that   the estimate of the optimal $w^{*}$  has  an AUC measure higher than  that of $w$=0.5 (uniform weighting) which confirms our thesis that non-uniform weighting could result in larger statistical power. This finding was confirmed using data generated
%in different settings, such as  the Dirichlet setting (introduced in \cite{JOFC}) and
according to  the Gaussian setting with different sets of parameters.

\begin{figure}[h!tb]
     \centering
\includegraphics[scale=0.58,angle=0]{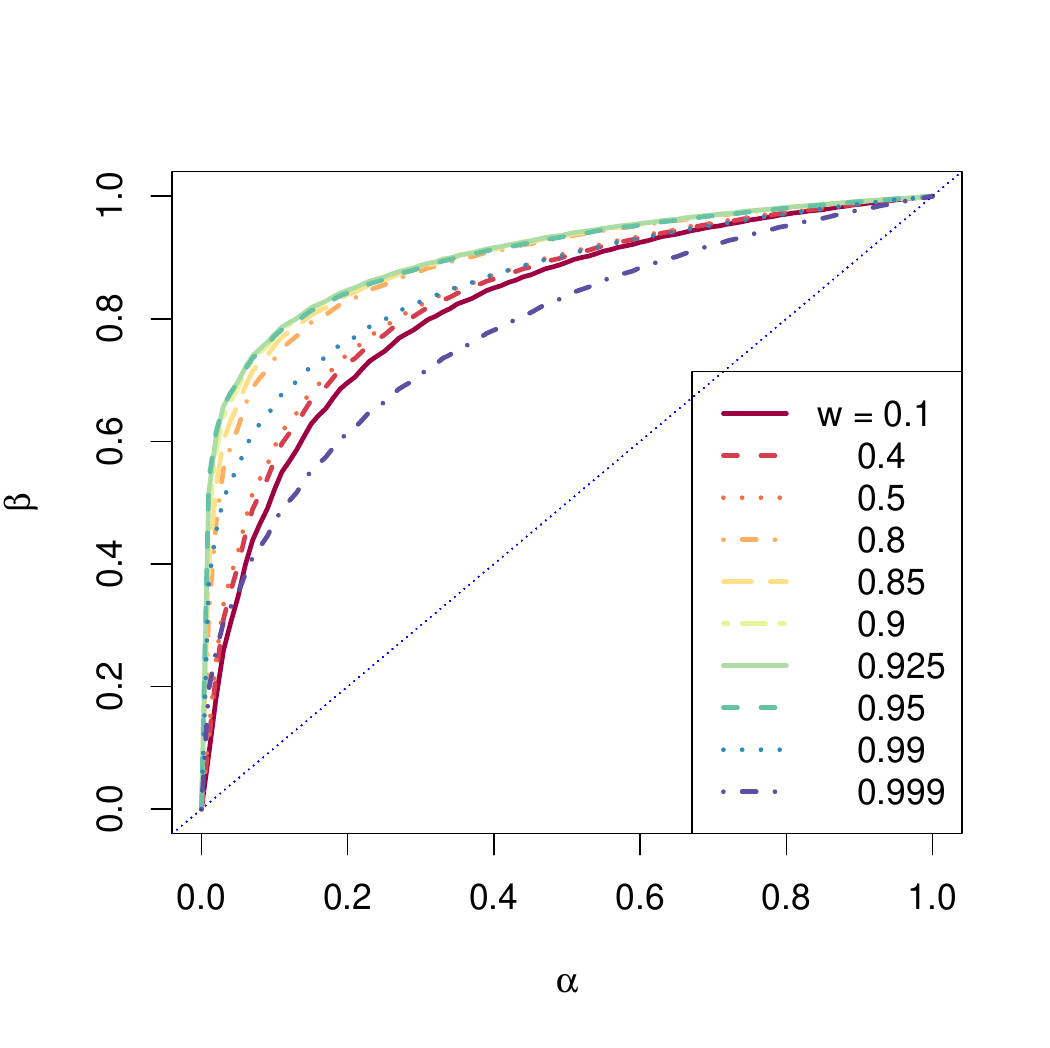}
\caption{$\beta$ vs $\alpha$  for different choices of $w$}
\label{fig:MVN-c0-power-alpha}
\end{figure}

 In Figure
 \ref{fig:MVN-c0-power-w},  $\beta(w)$ is plotted against $w$ for fixed values of $\alpha$.  Here the effect of $w$ on power can be seen more clearly: for all  three values of $\alpha$, $\beta(w)$ increases as $w$ approaches a value in the range (0.91,0.96) and then starts to decrease. We see this trend for different values of $\alpha$ which is consistent with our conjecture that the AUC function, which is defined in equation \eqref{AUC_def}, is unimodal.

\begin{figure}[htb]
  %\captionsetup[subfigure]{subrefformat=parens,labelformat=parens}
      \centering
         \includegraphics[scale=.54]{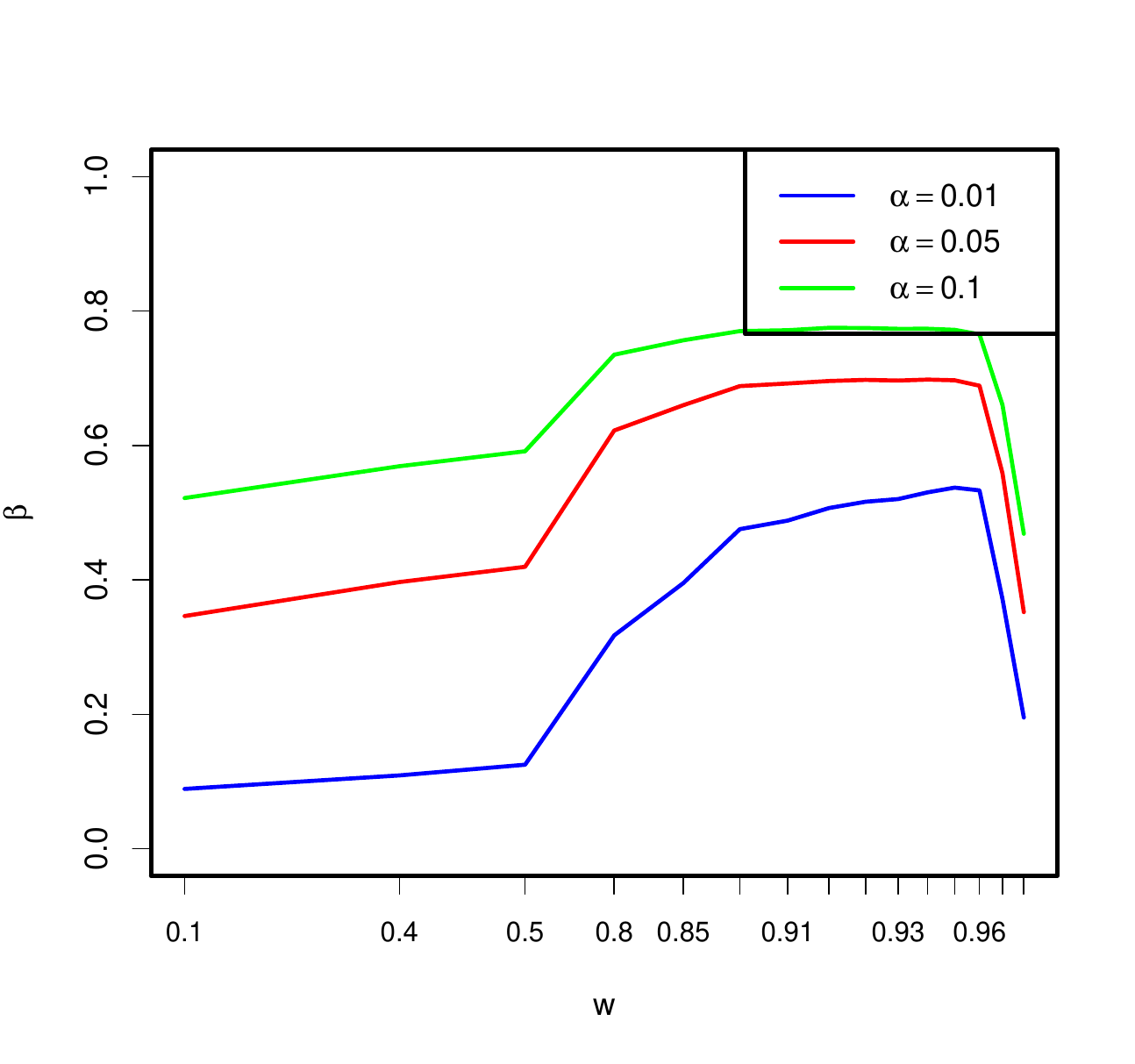}
\caption{$\beta$ vs $w$ plot for different choices of  $\alpha$}
\label{fig:MVN-c0-power-w}

  \end{figure}
The average AUC measure for these $nmc=400$ Monte Carlo replicates are  in  Table \ref{tab:AUCW}.

\begin{figure}[htb]
	\centering

		\includegraphics[scale=.63]{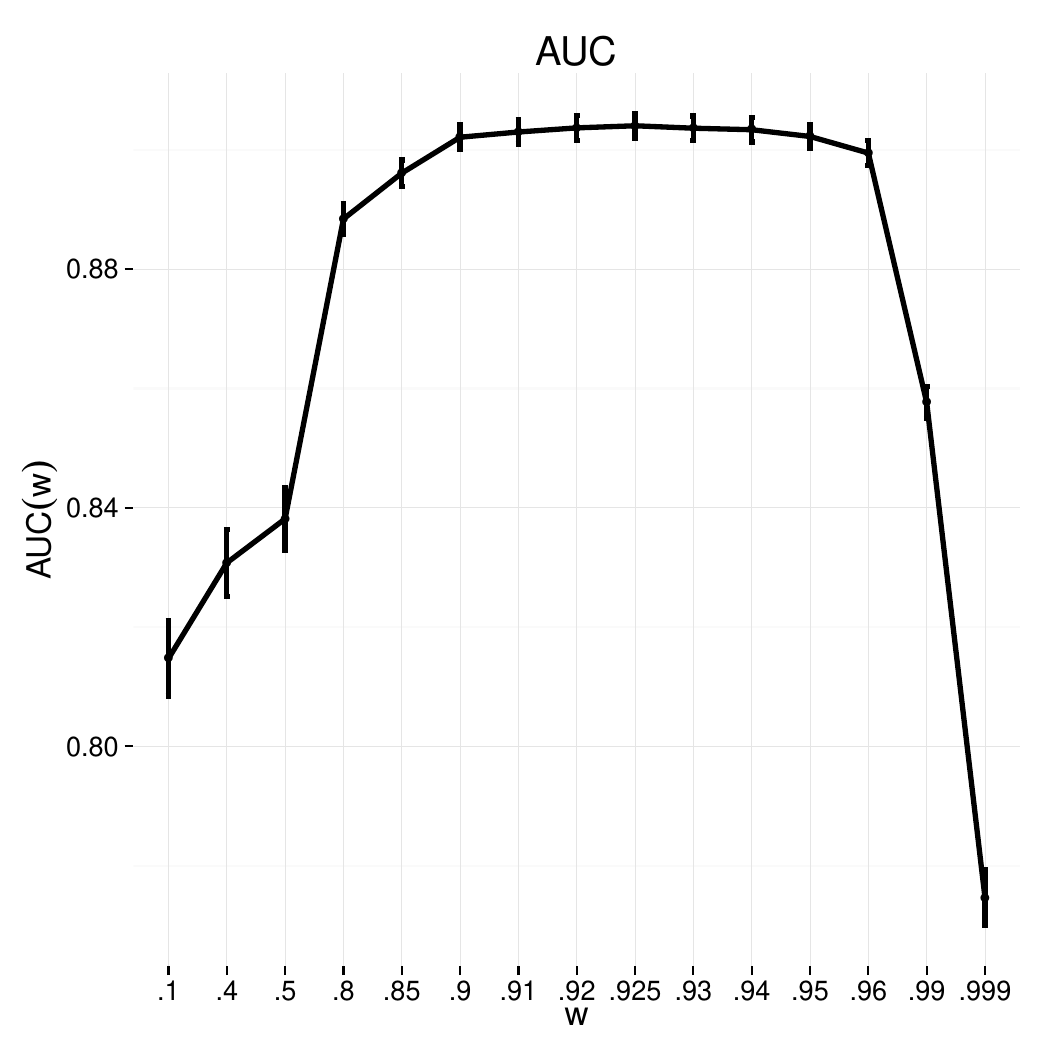}

	\caption{Mean and SE of $AUC(w)$ values for 400 Monte Carlo replicates.}
	\label{fig:AUCW}
\end{figure}

\begin{table}[tb]
\centering
\begin{tabular}{rrrrrrrrr}
  \hline
 $w$ & 0.1 & 0.4 & 0.5 & 0.8 & 0.85 & 0.9 & 0.91 & 0.92 \\
  \hline
mean & 0.8147 & 0.8308 & 0.8381 & 0.8884 & 0.8961 & 0.9021 & 0.9030 & 0.9037  \\
  SE & 0.0640 & 0.0574 & 0.0537 & 0.0258 & 0.0226 & 0.0209 & 0.0206 & 0.0210   \\
	\hline
	  $w$ & 0.925& 0.93 & 0.94 & 0.95 & 0.96 & 0.99 & 0.999 & \\
		\hline
		mean & 0.9040 & 0.9036 & 0.9034 & 0.9022 & 0.8995 & 0.8576 & 0.7746 & \\
		SE & 0.0209 & 0.0209 & 0.0210 & 0.0210 & 0.0217 & 0.0270 & 0.0474 &\\
   \hline
\end{tabular}
\caption{Mean and standard error of  $AUC(w)$ for 400 Monte Carlo replicates.}
\label{tab:AUCW}
\end{table}

  The value of $w$ which results in the highest AUC measure  is different for each Monte Carlo replicate.  The number of  replicates  for  which a particular $w$ value led to the highest AUC measure is shown in  the bar chart in  Figure \ref{fig:ArgMaxWAUCW}. Only the non-zero counts are  shown in the plot. The estimate $\hat{w}^*$ can be chosen as   0.925, as it is the mode of $w^*$ estimates from all replicates. We should note that the AUC function is very flat in the interval $(0.85,0.99)$, and it is possible that the difference between the largest value of the AUC measure and the next largest is very small for any replicate.
\begin{figure}[h!]
	\centering

		\includegraphics[scale=.53]{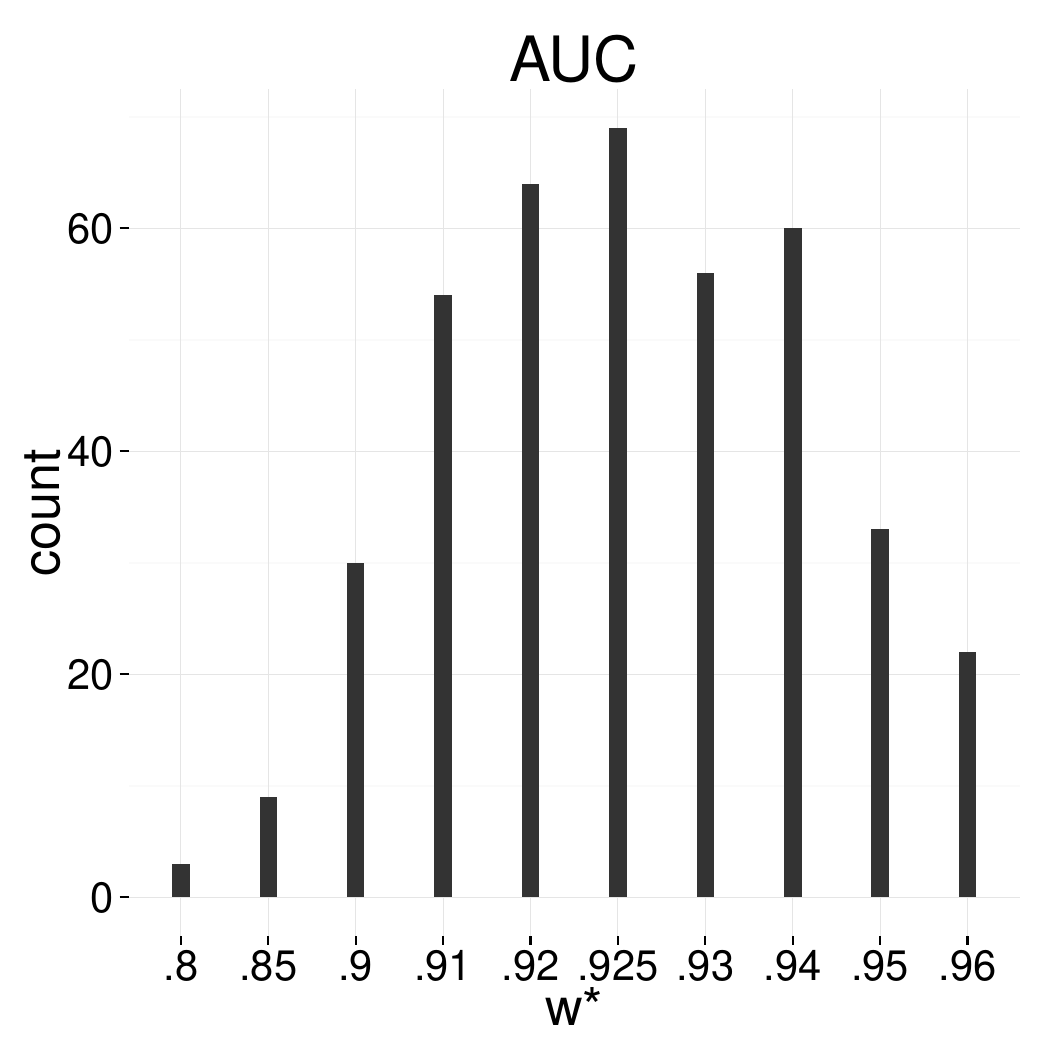}

	\caption{Frequency plot of $w^*$ estimates for 400 replicates.}
	\label{fig:ArgMaxWAUCW}
\end{figure}

 %It is  interesting that the optimal value of $w$ seems to be in the range of $(0.85,1)$ for this setting, which suggests a significant emphasis on commensurability might be  critical for the match detection  task.
%Simulations with other values for  the embedding dimension $d$  resulted in $w^*$ estimates closer to $0.5$.

\begin{figure}[htb]
     \centering
\includegraphics[scale=0.62,angle=0]{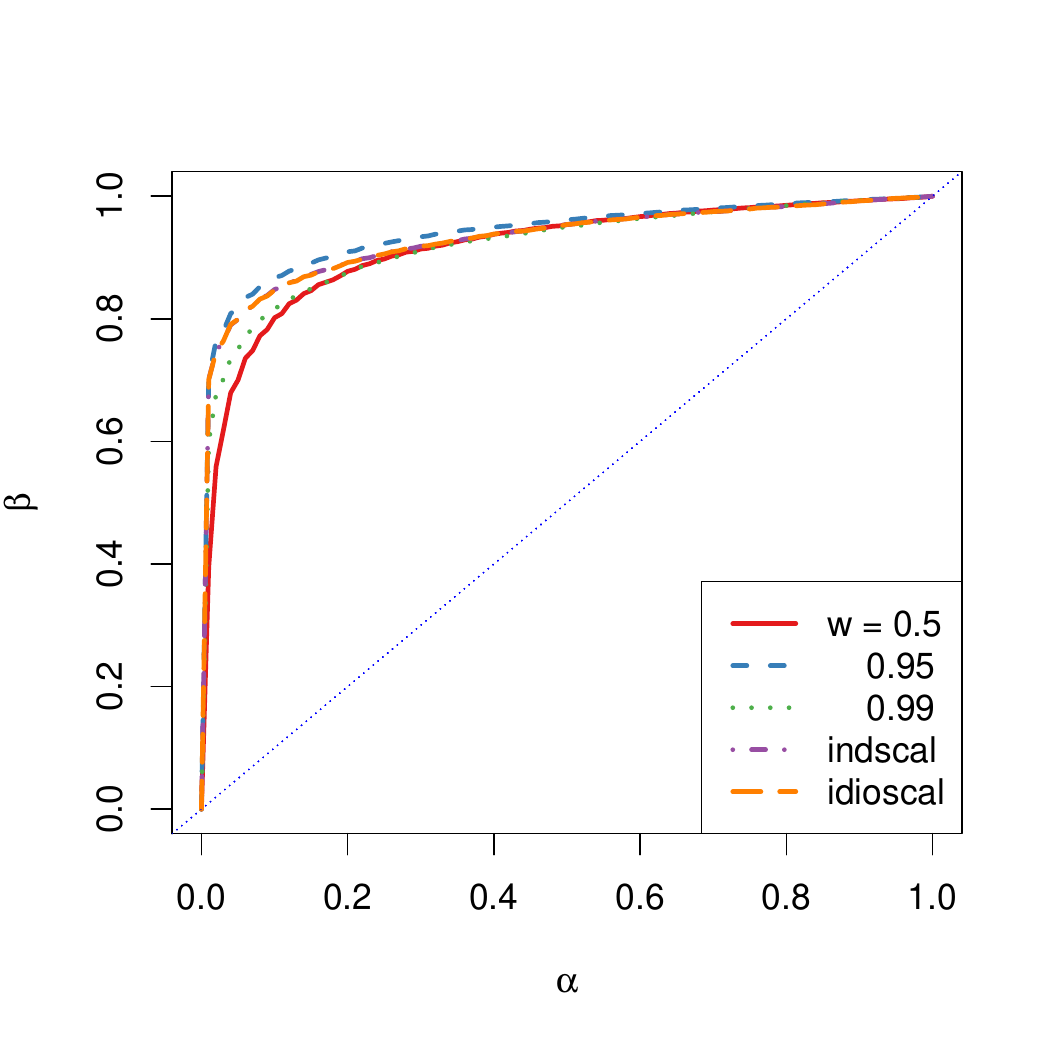}
\caption{$\beta$ vs $\alpha$  for JOFC with different choices of $w$ and for \mbox{INDSCAL} method}
\label{fig:indscal_compare}
\end{figure}

We also compared our approach with two 3-way MDS-based methods, \mbox{IDIOSCAL} and \mbox{INDSCAL}, for the Gaussian setting. Our JOFC approach with the best $w$ value outperforms both as shown in Figure \ref{fig:indscal_compare}. The ROC curves for \mbox{INDSCAL} and  \mbox{IDIOSCAL} are indistinguishable  as the  estimated transformation matrix from group space to configuration spaces should not deviate from the identity  matrix too much\footnote{For the configuration space of dimension $p$, \mbox{INDSCAL} restricts the transformation matrix to  $p \times p$ diagonal matrices   while \mbox{IDIOSCAL} allows for any $p \times p$  square matrix.}. Thus,  both methods are equivalent in this Gaussian setting with identity mapping between the two conditions.
\section{Experiments with multimodal data}
To test our approach, we will use a real dataset with two disparate conditions. The dissimilarities in the two conditions\footnote{The dissimilarity data are available in \url{http://www.cis.jhu.edu/~parky/CGP/cgp.html}.}  are derived from brain fMRI images collected from $n=42$ patients, and from the  personality scores of those patients \cite{RealData}. We wish to discover commensurate mappings for data from the two domains and match detection provides a simple test problem for measuring the commensuracy of the joint embedding.%The high-dimensional data in both conditions make discovery of correlation structure difficult, due to spurious correlation and computational complexity.

 For each replicate of our experiment, we randomly sample two matched  pairs of rows from the two dissimilarity matrices. We use one of the matched pairs as our matched test example. We also  use one row  of each pair (say, first row from the first condition and second row from the second condition) as the unmatched test example. The remaining rows/columns form the in-sample dissimilarity matrices ${\Delta}_1$, ${\Delta}_2$. We jointly embed these 40 $\times$ 40 dissimilarity matrices. The test examples are then OOS-embedded to compute the test statistic for match detection. The critical value for the decision to reject the null hypothesis can be computed by a boot-strapping method, i.e. repeatedly sampling two matched row pairs of ${\Delta}_1$ and ${\Delta}_2$ to OOS-embed, so that the test statistic under the null and alternative hypothesis can be computed\footnote{The common bootstrapping method of resampling observations is inappropriate in this case, since the dissimilarity matrices would have rows of zeroes.}.

The ROC curve in figure \ref{fig:real_MRIcogndata_ROC} shows the performance of the JOFC approach. We also show the results for the PrM method which uses Procrustes matching of separate MDS embeddings for a commensurate representation.  For low $\alpha$ values, which are typically of more interest, the JOFC approach out-performs  PrM. We chose  $\alpha =0.05$ and used critical values computed  via bootstrapping in each replicate of the experiment. The effective size averaged over the replicates is 0.04, the average effective power is 0.18.

\begin{figure}[h]
	\centering
		\includegraphics[scale=.85]{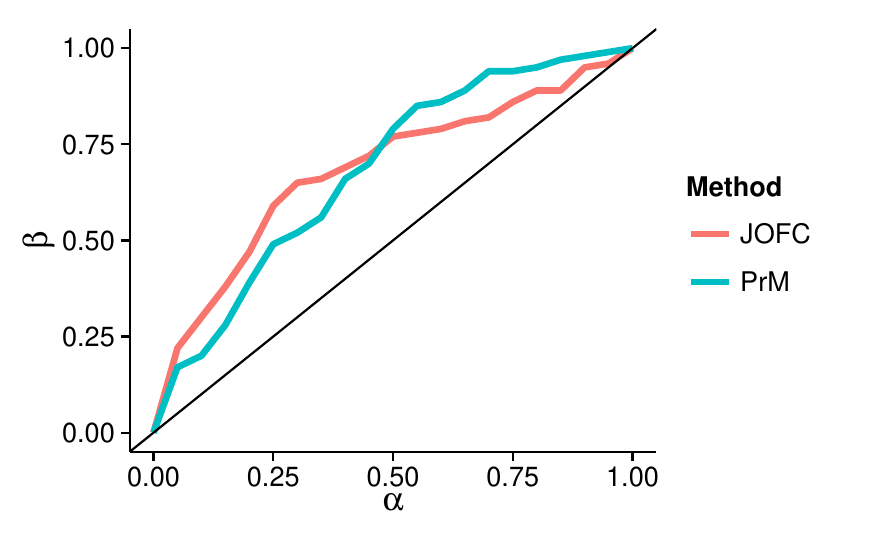}
	\caption{ROC curve for multimodal dataset: MRI/cognitive test }
	\label{fig:real_MRIcogndata_ROC}
\end{figure}

\section{Conclusion}
 We investigated the tradeoff between fidelity and commensurability and its relation to the weighted raw stress criterion for MDS.
   For  hypothesis testing as the exploitation task, different values of the tradeoff parameter $w$ were compared in terms of testing power.
    The results indicate that when doing a joint optimization, one should consider an optimal compromise point between fidelity and commensurability,
       which corresponds to an optimal weight $w^*$ of the weighted raw stress criterion in contrast to the uniform weighting. We consider an estimate of $w^*$ chosen from a finite set of $w$ values for a data generated according to the Gaussian setting. We test the applicability of the JOFC approach for a real multimodal dataset and find it provides satisfactory performance.
%Future work will include investigations into estimation of $w^*$  for different inference tasks and the effect of model selection (specifically, the choice of the embedding dimension) on the value of $w^*$.

%\FloatBarrier
\bibliographystyle{plain}
\bibliography{FidComm}

\end{document}